%
\documentclass[runningheads]{llncs}
\usepackage{amsmath}
\usepackage{cite}
\usepackage{subfigure}
\usepackage{multirow}
\usepackage{authblk}
\usepackage{indentfirst}
\usepackage{algorithm}
\usepackage{algorithmicx}
\usepackage{algpseudocode}
\usepackage{bm}
\usepackage{CJK}
\usepackage{float}
\usepackage{colortbl}
\usepackage{textcomp,booktabs}
\usepackage{array}
\usepackage{xcolor}%
\usepackage{amsfonts}%
\usepackage{amssymb}
\usepackage{graphicx}
\newcommand{\eqnb}{\begin{equation}}
\newcommand{\eqne}{\end{equation}}

\newtheorem{The}{Theorem}

%

\begin{document}
\title{Blockchain Queue Theory\thanks{Quan-Lin Li was supported by the National Natural Science Foundation of China under grant No. 71671158 and No. 71471160, and by the Natural Science Foundation of Hebei province under grant No. G2017203277.}}
%
%
\author{Quan-Lin Li\inst{1} \and Jing-Yu Ma\inst{1} \and Yan-Xia Chang\inst{2}}
\authorrunning{Q. L. Li et al.}
%
\institute{School of Economics and Management Sciences, Yanshan University, Qinhuangdao 066004, China\\
\email{liquanlin@tsinghua.edu.cn} \and
College of Science, Yanshan University, Qinhuangdao 066004, China}
\maketitle              
\begin{abstract}
Blockchain has many benefits including decentralization, availability, persistency, consistency, anonymity, auditability and accountability, and it also covers a wide spectrum of applications ranging from cryptocurrency, financial services, reputation system, Internet of Things, sharing economy to public and social services. Not only may blockchain be regarded as a by-product of Bitcoin cryptocurrency systems, but also it is a type of distributed ledger technologies through using a trustworthy, decentralized log of totally ordered transactions. By summarizing the literature of blockchain, it is found that more and more important research is to develop basic theory, for example, mathematical models (Markov processes, queueing theory and game models) for mining management and consensus mechanism, performance analysis and optimization of blockchain systems. In this paper, we develop queueing theory of blockchain systems and provide system performance evaluation. To do this, we design a Markovian batch-service queueing system with two different service stages, which are suitable to well express the mining process in the miners pool and the building of a new blockchain. By using the matrix-geometric solution, we obtain a system stable condition and express three key performance measures: (a) The average number of transactions in the queue, (b) the average number of transactions in a block, and (c) the average transaction-confirmation time. Finally, we use numerical examples to verify computability of our theoretical results. Although our queueing model here is simple only under exponential or Poisson assumptions, our analytic method will open a series of potentially promising research in queueing theory of blockchain systems.

\keywords{Blockchain \and Bitcoin \and Queueing theory \and Matrix-geometric solutions \and Mining process \and Block-generation process \and Blockchain-building process.}
\end{abstract}

\section{Introduction}

In this paper, we develop queueing theory of blockchain systems under a
dynamic behavior setting. Such a blockchain queue in general is very
necessary and useful in performance analysis and optimization of blockchain
systems, and it will also be helpful in optimal design of
blockchain technologies. To this end, we propose and analyze a Markovian
batch-service queueing system with two different service stages, which
are suitable to well express the mining process in the miners
pool and the building of a new blockchain. By using the matrix-geometric
solution, we obtain a system stable condition and express three key
measures: The average number of transactions in the queue, the average
number of transactions in a block, and the average transaction-confirmation
time. At the same time, we use some numerical examples to verify effective
computability of our theoretical results. Different from the previous works
in the literature, this paper is the first one to give a complete solution
with respect to analysis of blockchain queues. We hope that our approach
opens a new avenue to queueing analysis of more general blockchain systems
in practice, and can motivate a series of promising future research on
development of blockchain technologies.

Blockchain is one of the most popular issues discussed extensively in recent
years, and it has already changed people's lifestyle in some real areas due
to its great impact on finance, business, industry, transportation,
heathcare and so forth. Since the introduction of Bitcoin by Nakamoto \cite%
{Nak:2008}, blockchain technologies have become widely adopted in many real
applications, for example, survey work of applications by NRI \cite{Nri:2015}
and Foroglou and Tsilidou \cite{For:2015}; finance by Tsai et al. \cite%
{Tsa:2016}; business and information systems by Beck et al. \cite{Bec:2017};
applications to companies by Montemayor et al. \cite{Mon:2017}; Internet of
Things and shared economy by Huckle et al. \cite{Huc:2016}; healthcare by
Mettler \cite{Met:2016}; and the others.

So far blockchain research has obtained many important advances, readers may
refer to a book by Swan \cite{Swa:2015}; survey papers by Tschorsch and
Scheuermann \cite{Tsc:2016}, Zheng et al. \cite{Zhe:2017a} and \cite%
{Zhe:2017b}, Lin and Liao \cite{Lin:2017} and Constantinides et al. \cite%
{Con£º2018}; a key research framework by Yli-Huumo et al. \cite%
{Yli:2016}, Lindman et al. \cite{Lind:2017} and Risius and Spohrer \cite%
{Ris:2017}; consensus mechanisms by Wang et al. \cite{Wan:2018} and Debus %
\cite{Deb:2017}; blockchain economics by Catalini and Gans \cite{Cat£º2016}
and Davidson et al. \cite{Dav£º2016}; and the others by Vranken \cite%
{Vra:2017} and Dinh et al. \cite{Din:2018}.

However, little work has been done on basic theory of blockchain systems so
far, for example, developing mathematical models (e.g., Markov processes,
queueing theory, game models, optimal methods, and decision making),
providing performance analysis and optimization, and setting up useful
relations among key factors or basic parameters (e.g., block size,
transaction fee, mining reward, solving difficulty, throughput, and
efficiency).

Our blockchain queueing model focuses on analysis of the block-generation
and blockchain-building processes in the mining management, in which the sum
of the block-generation and blockchain-building times is regarded as the
transaction-confirmation time of a block. For convenience of reader's understanding, it
is necessary to simple recall some papers which discussed the miner pools
and the mining processes. A blockchain is maintained and updated by the
mining processes in which many nodes, called miners, compete for finding
answers of a very difficult puzzle-like problem. While the transactions are
grouped into a block, and then the block is pegged to the blockchain once
the key nonce is provided by means of a mining competition that an
algorithmic puzzle specialized for this block is solved. For such a mining
process, readers may refer to, for example, Bitcoin by Nakamoto \cite%
{Nak:2008}, Bhaskar and Chuen \cite{Bha:2015} and B\"{o}hme et al. \cite%
{Bhm:2015}; and blockchain by Section 2 in Zheng et al. \cite{Zhe:2017b} and
Section 2 in Dinh et al. \cite{Din:2018}. At the same time,
the mining processes are well supported by mining reward methods and consensus mechanism whose
detailed analysis was given in Debus \cite{Deb:2017} as well as an excellent
overview by Wang et al. \cite{Wan:2018}. In addition, it may be useful to
see transaction graph and transaction network by Ober et al. \cite{Obe:2013}
and Kondor et al. \cite{Kon:2014}. Finally, the mining processes were also
dicussed by game theory, e.g., see Houy \cite{Hou:2014}, Lewenberg et al. %
\cite{Lew:2015}, Kiayias et al. \cite{Kia:2016} and Biais et al. \cite%
{Bia:2018}.

Kasahara and Kawahara provided an early research (in fact, so far there have
been only their two papers in the literature) on applying queueing theory to
deal with the transaction-confirmation time for Bitcoin, in which they gave some
interesting idea and useful simulations to heuristically motivate future
promising research. See Kasahara and Kawahara \cite{Kas:2016} and Kawahara
and Kasahara \cite{Kaw:2017} for more details. In those two papers, Kasahara
and Kawahara assumed that the transaction-confirmation times follow a continuous
probability distribution function. Then they used the supplementary variable
method to set up a system of differential-difference equations (see Section
3 of Kawahara and Kasahara \cite{Kaw:2017}) by using the elapsed service
time. However, they have not correctly given the unique solution of the
system of differential-difference equations yet, although they used the
generating function technique to provide some formalized computation. For
example, the average number $E[N]$ of transactions in the system and the
average transaction-confirmation time: $E[T]=E[N]/\lambda $ given in (17) of
Kawahara and Kasahara \cite{Kaw:2017}, it is worth noting that they still
directly depend on the infinitely-many unknown numbers: $\alpha _{n}$ for $%
n=0,1,2,\ldots$. However, those unknown numbers $\alpha _{n}$ defined in
their paper are impossible to be obtained by such an ordinary technique. In
fact, we also believe that analysis of the Bitcoin queueing system with
general transaction-confirmation times, given in Kawahara and Kasahara \cite%
{Kaw:2017}, is still an interesting open problem in the future queueing
research.

To overcome the difficulties involved in Kawahara and Kasahara \cite%
{Kaw:2017}, this paper introduce two different exponential service stages
corresponding to the block-generation and blockchain-building times.
As seen in Section 2, such two service stages are very
reasonable in description of the block-generation and blockchain-building
processes. Although our blockchain queueing model is simple only under
exponential or Poisson assumptions, it is easy to see that this model is
still very interesting due to its two stages of batch services: a block of
transactions is generated and then a new blockchain is built. At the same
time, we obtain several new useful results as follows:

\textbf{(a) Stability:} This system is positive recurrent if and only if
\begin{equation*}
\frac{b\mu _{1}\mu _{2}}{\mu _{1}+\mu _{2}}>\lambda .
\end{equation*}%
Note that this stable condition is not intuitive, and it can not be obtained
by means of some simple observation. In addition, since our
transaction-confirmation time $S$ is the sum of of the block-generation time
$S_{1}$ and the blockchain-building time $S_{2}$, it is clear that our
transaction-confirmation time obeys a generalized Erlang distribution of
order $2$ with the mean $E[S]=E[S_{1}]+E[S_{2}]=1/\mu _{1}+1/\mu _{2}$. Thus
our stable condition is the same as that condition: $\lambda E[S]<b$ given
in Page 77 of Kawahara and Kasahara \cite{Kaw:2017}.

\textbf{(b) Expressions:} By using the matrix-geometric solution, we use the
rate matrix to provide simple expressions for the average number of
transactions in the queue, the average number of transactions in a block,
and the average transaction-confirmation time. At the same time, we use
numerical examples to verify computability of our theoretical results.

The structure of this paper is organized as follows. Section 2 describes an
interesting blockchain queue. Section 3 establishes a continuous-time Markov
process of GI/M/1 type, and derive a system stable condition and the
stationary probability vector by means of the matrix-geometric solution.
Section 4 provides simple expressions for three key performance measures, and uses some numerical
examples to verify computability of our theoretical results. Finally, some
concluding remarks are given in Section 5.

\section{Model Description for a Blockchain Queue}

In this section, based on the real background of blockchain, we design an
interesting blockchain queue, in which the block-generation and
blockchain-building processes are expressed as a two stages of
batch services.

When using a queueing system to model the blockchain, it is a key to set up
the service process by means of analysis of the mining management which is
related to the consensus mechanism. Here, we take a service time as the
transaction-confirmation time which is the sum of the block-generation
and blockchain-building times, that is, our service time is two stages: the
first one is generated from the mining processes, while another comes from
the network latency. In the block-generation stage, a newly generated block
is confirmed by solving a computational intensive problem by means of a
cryptographic Hash algorithm, called \textit{mining}; while a number of
nodes who compete for finding the answer is called \textit{miners}. The
winner will be awarded reward, which consists of some fixed values and fees
of transactions included in the block, and he still has the right to peg a
new block to the blockchain. In addition, a block is a list of transactions,
together with metadata including the timestamp of the current block, the
timestamp of the most previous block, and a field called a nonce which is
given by the mining winner. Therefore, it is seen that the block-generation and
blockchain-building processes, the two stages of services, can be easily
understand from the real background of blockchain, e.g., see Bitcoin
networks by Nakamoto \cite{Nak:2008} and Bhaskar and Chuen \cite{Bha:2015};
and blockchain by Section 2 in Zheng et al. \cite{Zhe:2017b} and Section 2
in Dinh et al. \cite{Din:2018}. Based on this, we can abstract the mining
processes to set up the two stages of services: the block-generation and
blockchain-building processes, so that the blockchain system is described as
a Markovian batch service queue with two different service stages, which is
depicted in Figure 1.

\begin{figure}[th]
\centering   \includegraphics[width=12.5cm]{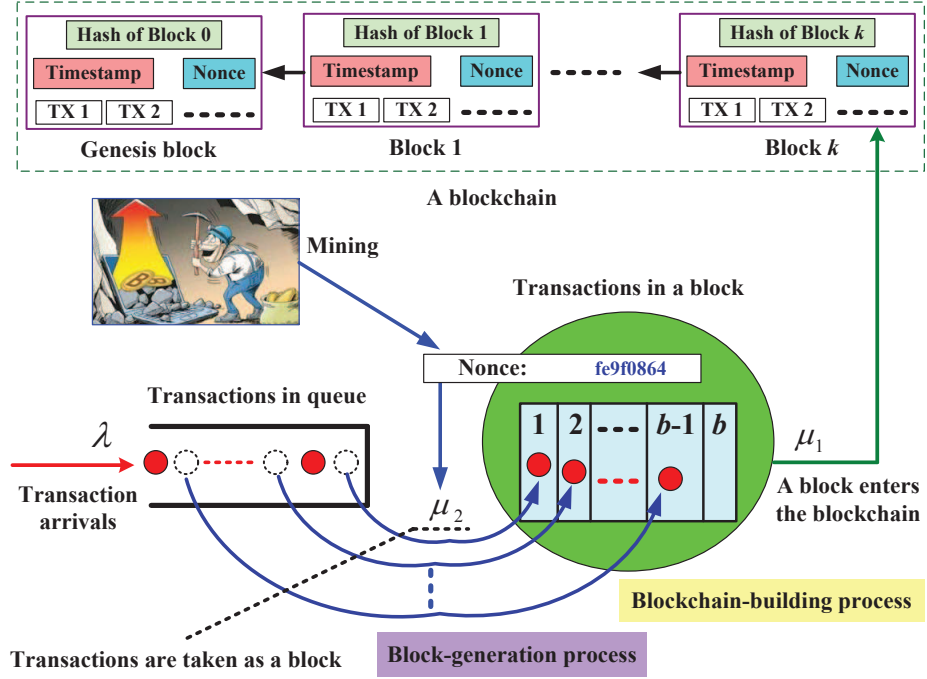}
\caption{A blockchain queueing system}
\label{figure:Fig-1}
\end{figure}

Now, from Figure 1 we provide some model descriptions as follows:

\textbf{Arrival process:} Transactions arrive at the blockchain system
according to a Poisson process with arrival rate $\lambda $. Each
transaction must first enter and queue up in an waiting room of infinite size.
Note that the arrival process of transactions is denoted in the lower left
corner of Figure 1.

\textbf{Service process:} Each arrival transaction first queues up in the
waiting room. Then it waits for being successfully mined into a block, this
is regarded as the first stage of service, called block generation. Until
the block-generation time, a group of transactions are chosen as a block and
simultaneously, a nonce is appended to the block by the mining winner. See
the lower middle part of Figure 1 for more details. Finally, the block with
the group of transactions will be pegged to a blockchain, this is regarded
as the second stage of service due to the network latency, called blockchain
building, see the right and lower part of Figure 1. In addition, the upper
part of Figure 1 also outlines the blockchain and the internal structure of
every block.

In the blockchain system, for the two stages of services, we assume that the
block-generation times in the first stage are i.i.d and exponential with
service rate $\mu _{2}$; the blockchain-building times in the second stage
are i.i.d and exponential with the service rate $\mu _{1}$.

\textbf{The block-generation discipline:} A block can consist of several
transactions but at most $b$ transactions. The transactions mined into a
block are not completely based on the First Come First Service (FCFS) with
respect to the transaction arrivals. Thus some transactions in the back
of this queue may be preferentially first chosen into the block, seen in the
lower middle part of Figure 1.

For simplification of description, our later computation will be based on
the FCFS discipline. This may be a better approximation in analysis of the
blockchain queueing system if the transactions are regarded as the
indistinctive ones under a dynamic behavior setting.

\textbf{The maximum block size:} To avoid the spam attack, the maximum block
size is limited. We assume that there are at most $b$ transactions in each
block. If the resulting block size is smaller than the maximum block size $b$%
, then the transactions newly arriving during the mining process of a block
can be accepted in the block again. If there are more than $b$ transactions
in the waiting room, then a new block will include $b$ transactions through
the full blocks to maximize the batch service ability in the blockchain
system.

\textbf{Independence:} We assume that all the random variables defined above
are independent of each other.

\textbf{Remark 1:} The arrival process of transactions at the blockchain
system may be non-Poisson (for example, Markov arrival process, renewal
process and nonhomogeneous Poisson process). On the other hand, the two
stages of batch service times may be non-exponential (for example,
phase-type distribution and general distribution). However, so far analysis of the
blockchain queues with renewal arrival process or with general service time
distribution has still been an interesting open problem in queueing research of
blockchain systems.

\textbf{Remark 2:} In the blockchain system, there are key factors, such as
the maximum block size, mining reward, transaction fee, mining strategy,
security of blockchain and so on. Based on some of them, we may develop
reward queueing models, decision queueing models, and game queueing models
in the study of blockchain systems. Analysis for these key factors will be
very necessary and useful in improving blockchain technologies in many future applications.

\section{A Markov Process of GI/M/1 Type}

In this section, for the blockchain queueing system, we establish a
continuous-time Markov process of GI/M/1 type. Based on this, we derive a
system stable condition and the stationary probability vector of this system
by means of the matrix-geometric solution.

Let $I\left( t\right) $ and$\ J\left( t\right) $ be the numbers of
transactions in the block and in the queue at time $t$, respectively. Then, $%
\left( I\left( t\right) ,J\left( t\right) \right) $ may be regarded as a
state of the blockchain queueing system at time $t$. Note that $i=0,1,\ldots
,b$ and $j=0,1,2,\ldots $, for various cases of $\left( I\left( t\right)
,J\left( t\right) \right) $ we write
\begin{align}
\mathbf{\Omega }=& \left\{ \left( i,j\right) :i=0,1,\ldots ,b,\ \ \
j=0,1,2,\ldots \right\}   \notag \\
=& \left\{ \left( 0,0\right) ,\left( 1,0\right) ,\ldots ,\left( b,0\right)
;\left( 0,1\right) ,\left( 1,1\right) ,\ldots ,\left( b,1\right) ;\ldots
;\right.   \notag \\
& \left. \left( 0,b\right) ,\left( 1,b\right) ,\ldots ,\left( b,b\right)
;\left( 0,b+1\right) ,\left( 1,b+1\right) ,\ldots ,\left( b,b+1\right)
;\ldots \right\} .  \label{eq-1}
\end{align}

\begin{figure}[th]
\centering                     \includegraphics[width=12cm]{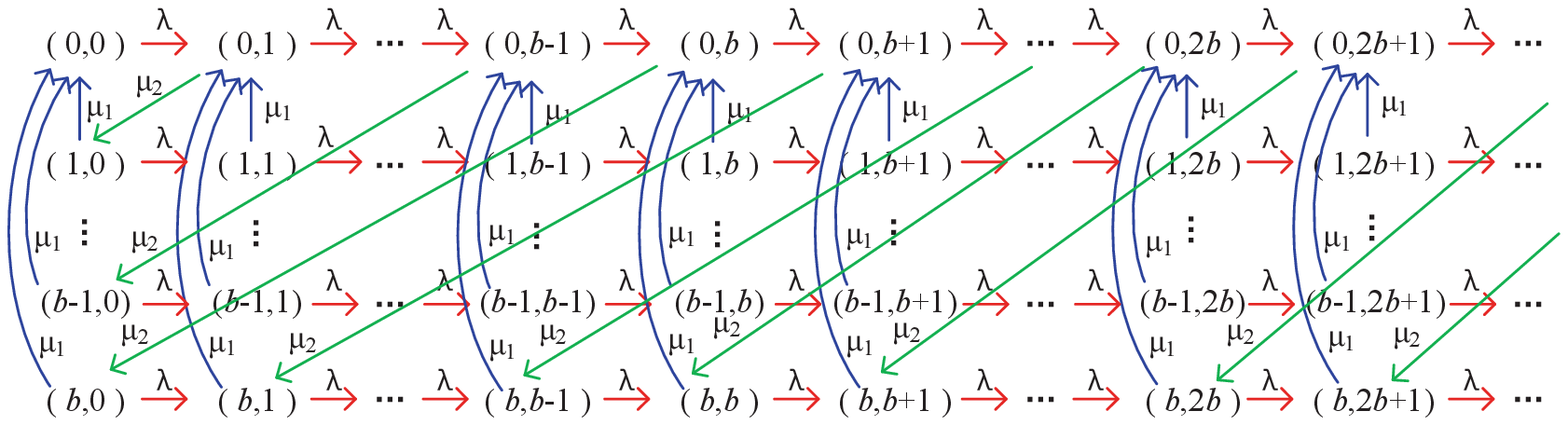}
\newline
\caption{State transition relation of a Markov process}
\label{figure:Fig-2}
\end{figure}

Let $X\left( t\right) =\left( I\left( t\right) ,J\left( t\right) \right) $.
Then $\left\{ X\left( t\right) :t\geq 0\right\} $ is a continuous-time
Markov process of GI/M/1 type on the state space $\mathbf{\Omega }$. Figure
2 denotes the state transition relation of the Markov process $\left\{
X\left( t\right) :t\geq 0\right\} $, thus its infinitesimal generator is
given by
\begin{equation}
\bm Q=\left(
\begin{array}{ccccccccc}
B_{0} & A_{0} &  &  &  &  &  &  &  \\
B_{1} & A_{1} & A_{0} &  &  &  &  &  &  \\
B_{2} &  & A_{1} & A_{0} &  &  &  &  &  \\
\vdots  &  &  & \ddots  & \ddots  &  &  &  &  \\
B_{b} &  &  &  & A_{1} & A_{0} &  &  &  \\
& A_{b} &  &  &  & A_{1} & A_{0} &  &  \\
&  & A_{b} &  &  &  & A_{1} & A_{0} &  \\
&  &  & \ddots  &  &  &  & \ddots  & \ddots
\end{array}%
\right) ,  \label{eq-2}
\end{equation}%
where
\begin{equation*}
A_{0}=\lambda I,\ \ A_{1}=\left(
\begin{array}{cccc}
-\left( \lambda +\mu _{2}\right)  &  &  &  \\
\mu _{1} & -\left( \lambda +\mu _{1}\right)  &  &  \\
\vdots  &  & \ddots  &  \\
\mu _{1} &  &  & -\left( \lambda +\mu _{1}\right)
\end{array}%
\right) ,A_{b}=\left(
\begin{array}{cccc}
0 & \cdots  & 0 & \mu _{2} \\
&  &  &  \\
&  &  &  \\
&  &  &
\end{array}%
\right) ,
\end{equation*}%
and
\begin{equation*}
B_{0}=\left(
\begin{array}{cccc}
-\lambda  &  &  &  \\
\mu _{1} & -\left( \lambda +\mu _{1}\right)  &  &  \\
\vdots  &  & \ddots  &  \\
\mu _{1} &  &  & -\left( \lambda +\mu _{1}\right)
\end{array}%
\right) ,
\end{equation*}%
\begin{equation*}
B_{1}=\left(
\begin{array}{ccccc}
0 & \mu _{2} & 0 & \cdots  & 0 \\
&  &  &  &  \\
&  &  &  &  \\
&  &  &  &
\end{array}%
\right) ,\ \ B_{2}=\left(
\begin{array}{ccccc}
0 & 0 & \mu _{2} & \cdots  & 0 \\
&  &  &  &  \\
&  &  &  &  \\
&  &  &  &
\end{array}%
\right) ,\ \ \ldots ,\ \ B_{b}=\left(
\begin{array}{cccc}
0 & \cdots  & 0 & \mu _{2} \\
&  &  &  \\
&  &  &  \\
&  &  &
\end{array}%
\right) .
\end{equation*}

For the continuous-time Markov process of GI/M/1 type, now we use the mean
drift method to discuss its system stable condition. Note that the mean
drift method for checking system stability is given a detailed introduction
in Chapter 3 of Li \cite{Li:2010} (e.g., Theorem 3.19 and the
continuous-time case in Page 172).

From Chapter 1 of Neuts \cite{Neu:1981} or Chapter 3 of Li \cite{Li:2010},
for the Markov process of GI/M/1 type, we write
\begin{equation*}
A=A_{0}+A_{1}+A_{b}=\left(
\begin{array}{ccccc}
-\mu _{2} &  &  &  & \mu _{2} \\
\mu _{1} & -\mu _{1} &  &  &  \\
\vdots  &  & \ddots  &  &  \\
\mu _{1} &  &  & -\mu _{1} &  \\
\mu _{1} &  &  &  & -\mu _{1}%
\end{array}%
\right) .
\end{equation*}%
Clearly, the Markov process $A$ with finite states is irreducible, aperiodic
and positive recurrent. In this case, we write its stationary probability
vector as $\bm\theta =\left( \theta _{0},\theta _{1},\ldots ,\theta
_{b}\right) $. Also, $\bm\theta $ is the unique solution to the system of
linear equations: $\bm\theta A=\bm0$ and $\bm\theta \bm e=1$, where $\bm e$
is a column vector of ones with proper dimension. It is easy to check that
\begin{equation*}
\bm\theta =\left( \frac{\mu _{1}}{\mu _{1}+\mu _{2}},0,\ldots ,0,\frac{\mu
_{2}}{\mu _{1}+\mu _{2}}\right).
\end{equation*}

The following theorem provides a necessary and sufficient condition under
which the Markov process $\bm Q$ is positive recurrent.

\begin{The}
The Markov process $\bm Q$ of GI/M/1 type is positive recurrent if and only
if
\begin{equation*}
\frac{b\mu _{1}\mu _{2}}{\mu _{1}+\mu _{2}}>\lambda .
\end{equation*}
\end{The}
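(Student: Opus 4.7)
The plan is to invoke the mean drift criterion for continuous-time Markov processes of GI/M/1 type (Theorem 3.19 of Li \cite{Li:2010}, referenced just before the statement). For the chain $\bm Q$, the up block is $A_0$ (jump of $+1$ level), the level-preserving block is $A_1$, and the only downward block is $A_b$, which removes exactly $b$ levels in a single transition. The criterion says that $\bm Q$ is positive recurrent if and only if the mean level-drift, computed under the stationary phase distribution $\bm\theta$ of $A=A_0+A_1+A_b$, is strictly negative, equivalently
\begin{equation*}
\bm\theta A_0 \bm e \;<\; b\, \bm\theta A_b \bm e,
\end{equation*}
the scalar $b$ on the right reflecting the fact that each downward transition decreases the level by $b$ rather than by one.

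Since $\bm\theta$ has already been computed in the paragraph preceding the theorem as $\bm\theta=\bigl(\tfrac{\mu_1}{\mu_1+\mu_2},0,\ldots,0,\tfrac{\mu_2}{\mu_1+\mu_2}\bigr)$, the remainder is a two-line calculation. Because $A_0=\lambda I$ and $\bm\theta\bm e=1$, the left-hand side is $\bm\theta A_0 \bm e=\lambda$. For the right-hand side, $A_b$ has its single nonzero entry $\mu_2$ in position $(0,b)$, so
\begin{equation*}
\bm\theta A_b = \Bigl(\,0,\,0,\,\ldots,\,0,\,\tfrac{\mu_1}{\mu_1+\mu_2}\mu_2\,\Bigr),
\qquad
\bm\theta A_b \bm e = \frac{\mu_1 \mu_2}{\mu_1+\mu_2}.
\end{equation*}
Multiplying by $b$ and comparing with $\lambda$ gives precisely $\lambda<\frac{b\mu_1\mu_2}{\mu_1+\mu_2}$, and both directions of the ``if and only if'' fall out in one stroke from the drift criterion.

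As a sanity check, $\frac{\mu_1\mu_2}{\mu_1+\mu_2}=(1/\mu_1+1/\mu_2)^{-1}$ is the reciprocal of the mean two-stage service time $E[S]=E[S_1]+E[S_2]$, so $\frac{b\mu_1\mu_2}{\mu_1+\mu_2}$ is exactly the maximum long-run throughput the batch server can sustain when it clears blocks of the full size $b$. The stability condition therefore reads ``server capacity exceeds arrival rate,'' consistent with the condition $\lambda E[S]<b$ already flagged in the introduction.

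The main obstacle I foresee is getting the correct form of the drift criterion: because the lone downward block $A_b$ skips $b$ levels at once rather than a single level, the factor $b$ has to appear as a multiplier on the downward side, and one must verify that the general theorem in Li \cite{Li:2010} accommodates this block-skip structure. Once that is pinned down, everything reduces mechanically to the explicit data for $\bm\theta$, $A_0$ and $A_b$ already on the page.
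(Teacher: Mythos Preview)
Your proposal is correct and follows essentially the same route as the paper: invoke the mean drift criterion $\bm\theta A_0\bm e<b\,\bm\theta A_b\bm e$ from Li \cite{Li:2010}, then plug in $\bm\theta A_0\bm e=\lambda$ and $b\,\bm\theta A_b\bm e=b\theta_0\mu_2=\frac{b\mu_1\mu_2}{\mu_1+\mu_2}$. Your added sanity-check interpretation in terms of server capacity is a nice touch not present in the paper.
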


\begin{proof}
By using the mean drift method given in Chapter 3 of Li \cite{Li:2010} (e.g., Theorem 3.19 and the
continuous-time case in Page 172), it is easy to see that the Markov process $\bm Q$ of GI/M/1
type is positive recurrent if and only if
\begin{equation}
\bm  \theta A_{0}\bm  e< b\bm  \theta A_{b}\bm  e. \label{eq-3}%
\end{equation}
Note that
\begin{equation}
\bm  \theta A_{0}\bm  e=\lambda, \label{eq-4}%
\end{equation}
and
\begin{equation}
b \bm  \theta A_{b}\bm  e= b \theta_{0}\mu_{2} = \frac{b \mu_{1}\mu_{2}}{\mu_{1}+\mu_{2}}, \label{eq-5}%
\end{equation}
thus we obtain
\[
\frac{b \mu_{1}\mu_{2}}{\mu_{1}+\mu_{2}}>\lambda.
\]
This completes the proof. \textbf{{\rule{0.08in}{0.08in}}}
\end{proof}

When the Markov process $\bm Q$ of GI/M/1 type is positive recurrent, we
write its stationary probability vector as
\begin{equation*}
\bm\pi =\left( \bm\pi _{0},\bm\pi _{1},\bm\pi _{2},\ldots \right) ,
\end{equation*}%
where
\begin{align*}
\bm\pi _{0}& =\left( \pi _{0,0},\pi _{1,0},\ldots ,\pi _{b,0}\right) , \\
\bm\pi _{1}& =\left( \pi _{0,1},\pi _{1,1},\ldots ,\pi _{b,1}\right) , \\
\bm\pi _{k}& =\left( \pi _{0,k},\pi _{1,k},\ldots ,\pi _{b,k}\right) ,\ \ \
k\geq 2.
\end{align*}

For the Markov process $\bm Q$ of GI/M/1 type, to compute its stationary
probability vector, we need to first obtain the rate matrix $R$, which is
the minimal nonnegative solution to the following nonlinear matrix equation
\begin{equation}
R^{b}A_{b}+RA_{1}+A_{0}=\mathbf{0}.  \label{eq-6}
\end{equation}

In general, it is very complicated to provide expression for the unique solution to this nonlinear matrix equation
due to the term $R^{b}A_{b}$ of size $b+1$. In fact, for the blockchain
queueing system, here we can not also provide an explicit expression for the
rate matrix $R$. For example, we consider a special case with $b=2$. For this
simple case, we have
\begin{equation*}
A_{0}=\lambda I,A_{1}=\left(
\begin{array}{ccc}
-\left( \lambda +\mu _{2}\right)  &  &  \\
\mu _{1} & -\left( \lambda +\mu _{1}\right)  &  \\
\mu _{1} &  & -\left( \lambda +\mu _{1}\right)
\end{array}%
\right) ,A_{2}=\left(
\begin{array}{ccc}
0 & 0 & \mu _{2} \\
&  &  \\
&  &
\end{array}%
\right) .
\end{equation*}%
From computing $R^{2}A_{2}+RA_{1}+A_{0}=\mathbf{0}$, it is easy to see that
for the simple case with the maximal block size $b=2$, we can not provide an
explicit expression for the rate matrix $R$ of size $3$ yet.

Although the rate matrix $R$ has not an explicit expression, we can use some
iterative algorithms, given in Chapters 1 and 2 of Neuts \cite{Neu:1981}, to give
its numerical solution. Here, an effective iterative algorithm is described
as
\begin{align*}
R_{0}& =\mathbf{0,} \\
R_{N+1}& =\left( R_{N}^{b}A_{b}+A_{0}\right) \left( -A_{1}\right) ^{-1}.
\end{align*}%
Note that this algorithm is fast convergent, that is, after a finite number
of iterative steps, we can numerically given a solution of higher precision
for the rate matrix $R$.

The following theorem directly comes from Theorem 1.2.1 of Chapter 1 in
Neuts \cite{Neu:1981}. Here, we restate it without a proof.

\begin{The}
If the Markov process $\bm Q$ of GI/M/1 type is positive recurrent, then the
stationary probability vector $\bm \pi=\left( \bm \pi_{0},\bm \pi_{1}, \bm %
\pi_{2},\ldots\right)$ is given by
\begin{equation}
\bm \pi_{k}=\bm \pi_{0}R^{k}, \ \ \ k\geq 1,  \label{eq-8}
\end{equation}
where the vector $\bm \pi_{0}$ is positive, and it is the unique solution to
the following system of linear equations:
\begin{align}
& \left. \bm \pi_{0}B\left[ R\right] =\bm \pi_{0},\right.  \label{eq-7} \\
& \left. \bm \pi_{0}\left( I-R\right) ^{-1}\bm e=1,\right.  \notag
\end{align}
and the matrix $B\left[ R\right] =\sum\nolimits_{k=0}^{b}R^{k}B_{k}$ is
irreducible and stochastic.
\end{The}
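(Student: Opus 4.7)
The plan is to follow Neuts's classical matrix-analytic argument for GI/M/1-type Markov processes.

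First, I would establish that the rate matrix $R$---the minimal nonnegative solution of $R^{b}A_{b}+RA_{1}+A_{0}=\mathbf{0}$---has spectral radius $\rho(R)<1$ whenever the positive-recurrence criterion of Theorem~1 holds. The standard route is a Perron--Frobenius analysis of the matrix polynomial $A(z)=A_{0}+zA_{1}+z^{b}A_{b}$: the negative-drift inequality $\bm\theta A_{0}\bm e<b\,\bm\theta A_{b}\bm e$ forces the Perron root of $R$ strictly inside the unit disk, and consequently the Neumann series $(I-R)^{-1}=\sum_{k\ge 0}R^{k}$ converges. Minimality picks out the physically relevant solution among the (possibly many) fixed points of the polynomial equation.

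Second, I would postulate the matrix-geometric ansatz $\bm\pi_{k}=\bm\pi_{0}R^{k}$ for $k\ge 1$ and verify it satisfies the homogeneous (interior) balance equations. Reading $\bm\pi\bm Q=\bm 0$ column by column at any column past the boundary gives, after substituting the ansatz and factoring an appropriate power of $R$ on the left, precisely the defining equation of $R$. Hence the ansatz is automatically consistent at every interior level.

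Third, I would handle the boundary. Because the backward jumps encoded by $A_{b}$ never reach level~$0$, the balance in the first block column involves only the matrices $B_{0},\ldots ,B_{b}$ and reads $\sum_{k=0}^{b}\bm\pi_{k}B_{k}=\bm 0$. Substituting $\bm\pi_{k}=\bm\pi_{0}R^{k}$ turns this into an equation for $\bm\pi_{0}$ against $B[R]=\sum_{k=0}^{b}R^{k}B_{k}$, which in Neuts's normalised form becomes $\bm\pi_{0}B[R]=\bm\pi_{0}$. That $B[R]\bm e=\bm e$ (i.e.\ $B[R]$ is stochastic) is a direct row-sum consequence of $\bm Q\bm e=\mathbf 0$ combined with the rate-matrix equation $R^{b}A_{b}+RA_{1}+A_{0}=\mathbf 0$; irreducibility of $B[R]$ is inherited from that of $\bm Q$ together with the nonnegativity of $R$.

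Finally, the normalisation $\sum_{k\ge 0}\bm\pi_{k}\bm e=1$ combined with the convergent geometric sum from Step~1 yields $\bm\pi_{0}(I-R)^{-1}\bm e=1$, which pins down $\bm\pi_{0}$ uniquely; uniqueness of the full stationary vector $\bm\pi$ then follows from positive recurrence. The hardest part of the program is Step~1: proving minimality of $R$ and the strict bound $\rho(R)<1$ from the drift condition. This is the essential spectral ingredient and is the only place where Theorem~1 is really used; the remaining steps are careful bookkeeping with the block structure of $\bm Q$ and an elementary application of the geometric series.
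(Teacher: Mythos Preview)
The paper does not prove this theorem at all: it explicitly writes ``the following theorem directly comes from Theorem~1.2.1 of Chapter~1 in Neuts~\cite{Neu:1981}. Here, we restate it without a proof.'' Your outline is precisely the classical Neuts argument that the citation points to, so in spirit your approach coincides with the paper's.

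One small wrinkle worth tightening in your Step~3: the raw level-$0$ balance from $\bm\pi\bm Q=\bm 0$ gives $\bm\pi_{0}\sum_{k=0}^{b}R^{k}B_{k}=\bm 0$, not $\bm\pi_{0}B[R]=\bm\pi_{0}$, since here $\bm Q$ is a continuous-time generator and the row sums of $B[R]$ are zero rather than one. The form $\bm\pi_{0}B[R]=\bm\pi_{0}$ with $B[R]$ \emph{stochastic} is Neuts's discrete-time statement; to recover it in the continuous-time setting one uniformises (e.g.\ passes to $I+\eta^{-1}\bm Q$ for a suitable $\eta>0$) or simply works directly with the generator form $\bm\pi_{0}B[R]=\bm 0$. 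You gesture at this with ``in Neuts's normalised form,'' but making the uniformisation explicit would close the gap cleanly. Otherwise your sketch is sound.
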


\section{Performance Analysis}

In this section, we provide performance analysis of the blockchain queueing
system. To this end, we provide three key performance measures and give
their simple expressions by means of the vector $\bm \pi_{0}$ and the rate
matrix $R$. Finally, we use numerical examples to verify computability of
our theoretical results, and show how the performance measures depend on
some key parameters of this system.

When the blockchain queueing system is stable, we write
\begin{equation*}
I=lim_{t\rightarrow +\infty }I\left( t\right) ,\ \ \ J=lim_{t\rightarrow
+\infty }J\left( t\right) .
\end{equation*}

\textbf{(a) The average number of transactions in the queue}

It follows from (\ref{eq-7}) and (\ref{eq-8}) that
\begin{equation*}
E\left[ J \right] =\overset{\infty}{\underset{j=0}{\sum}}j\overset{b}{%
\underset{i=0}{\sum}}\pi_{i,j}=\overset{\infty}{\underset{j=0}{\sum}}j\bm %
\pi_{j}\bm e=\bm \pi _{0}R\left( I-R\right) ^{-2}\bm e.
\end{equation*}

\textbf{(b)} \textbf{The average number of transactions in the block}

Let $\bm           h=\left( 0,1,2,\ldots,b\right) ^{T}$. Then
\begin{equation*}
E\left[ I \right] =\overset{\infty}{\underset{j=0}{\sum}}\underset{i=0}{%
\overset{b}{\sum}}i\pi_{i,j}=\underset{j=0}{\overset{\infty }{\sum}}\bm %
\pi_{j}\bm h=\bm \pi_{0}\left( I-R\right) ^{-1}\bm h.
\end{equation*}

\textbf{(c) The average transaction-confirmation time}

In the blockchain system, the transaction-confirmation time is the time
interval from the time epoch that a transaction arrives at the waiting room
to the point that the blockchain pegs a block with this transaction. In fact, the
transaction-confirmation time is the sojourn time of the transaction in the
blockchain system, and it is also the sum of the block-generation and
blockchain-building times of a block with this transaction.

Let $T$ denote the transaction-confirmation time of any transaction when the
blockchain system is stable.

The following theorem provides expression for the average
transaction-confirmation time by means of the stationary probability vector.

\begin{The}
If the blockchain queueing system is stable, then the average
transaction-confirmation time $E[T]$ is given by%
\begin{equation*}
E[T]= \overset{\infty}{\underset{k=0}{\sum}} \overset{b-1}{\underset{l=0}{%
\sum}} \pi_{0,kb+l}\left( k+1\right) \left( \frac{1}{\mu_{1}}+\frac{1}{%
\mu_{2}}\right) + \overset{b}{\underset{i=1}{\sum}} \overset{\infty}{%
\underset{k=0}{\sum}} \overset{b-1}{\underset{l=0}{\sum}} \pi_{i,kb+l}\left[
\frac{1}{\mu_{1}}+\left( k+1\right) \left( \frac{1}{\mu_{1}}+\frac{1}{\mu_{2}%
}\right) \right] .
\end{equation*}
\end{The}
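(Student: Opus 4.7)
The plan is to compute $E[T]$ by conditioning on the state $(i,j)$ that a tagged transaction sees upon arrival and invoking PASTA: because the arrival process is Poisson, the time-stationary probability $\pi_{i,j}$ also equals the probability that an arriving transaction finds the system in state $(i,j)$. Thus it suffices to write
\[
E[T] \;=\; \sum_{j=0}^{\infty}\pi_{0,j}\, E[T \mid (0,j)] \;+\; \sum_{i=1}^{b}\sum_{j=0}^{\infty} \pi_{i,j}\, E[T \mid (i,j)],
\]
reindex each $j = kb+l$ with $k\ge 0$ and $0\le l\le b-1$, and show that the two conditional expectations take the closed forms appearing in the statement.

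For the conditional expectations I would argue as follows. A tagged transaction arriving to state $(i,j)$ joins position $q = j+1$ from the head of the FCFS queue, and by the memoryless property the residual service time of the stage currently in progress is exponential with the original rate. When the system is in the mining stage ($i=0$), mining completes after mean $1/\mu_{2}$ and a full batch of $\min(b,\text{queue length})$ is pulled from the front; hence if $q>b$ the tagged transaction moves to position $q-b$, and otherwise it enters the next block. Iterating this recursion shows that the tagged transaction is placed into the $(k+1)$-st block counted from its arrival, where $k = \lfloor j/b\rfloor$. Each of these $k+1$ cycles comprises one mining phase (mean $1/\mu_{2}$) followed by one blockchain-building phase (mean $1/\mu_{1}$), so
\[
E[T \mid (0,j)] \;=\; (k+1)\!\left(\tfrac{1}{\mu_{1}} + \tfrac{1}{\mu_{2}}\right),\qquad j = kb+l.
\]
When instead the system is in the blockchain-building stage ($i\ge 1$), the tagged transaction must first wait out the residual building time (mean $1/\mu_{1}$) before mining resumes with queue content $j$, and the preceding analysis applies to the subsequent evolution; this yields
\[
E[T \mid (i,j)] \;=\; \tfrac{1}{\mu_{1}} + (k+1)\!\left(\tfrac{1}{\mu_{1}} + \tfrac{1}{\mu_{2}}\right),\qquad j = kb+l,\ i\ge 1.
\]

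Substituting these conditional means back into the PASTA decomposition and replacing $\sum_{j\ge 0}$ by the double sum $\sum_{k\ge 0}\sum_{l=0}^{b-1}$ via the bijection $j = kb+l$ produces exactly the expression in the theorem. The main obstacle is the combinatorial step verifying that the tagged transaction lands in the $(k+1)$-st block: one must check carefully that FCFS together with the batching rule (full blocks of size $b$ whenever the queue permits) keeps the tagged transaction's distance from the head of the queue unaffected by arrivals occurring during either service stage, so that $k = \lfloor j/b \rfloor$ is indeed the number of complete mining/building cycles that must elapse before the tagged transaction is pegged to the blockchain. Once this is settled, everything else, including the residual-lifetime arguments and the re-indexing, is routine memoryless-exponential calculation.
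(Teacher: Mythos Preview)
Your argument is correct and follows essentially the same route as the paper: condition on the state $(i,j)$ seen by an arriving transaction, split into the cases $i=0$ and $i\ge 1$, and count the number of block-generation and blockchain-building phases the tagged transaction must traverse. Your write-up is in fact more explicit than the paper's, which asserts the counts $(k+1)/\mu_2$ and $(k+1)/\mu_1$ (respectively $(k+2)/\mu_1$) without spelling out the PASTA justification or the FCFS position-tracking recursion that you supply.
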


\begin{proof}
It is clear that $\left(  i,kb+l\right)  $ is a state of the blockchain
system for $i=0,1,\ldots,b,k=0,1,\ldots,$ and $l=0,1,\ldots,b-1.$

When a transaction arrives at the blockchain system at time $t$, it
observes and finds that there are $i$ transactions in the block and
$kb+l$ transactions in the queue. Based on the two stages of exponential
service times and by using the stationary probability vector $\bm \pi$, we apply the law of total
probability to be able to compute the average transaction-confirmation time.
For this, our computation will have two different cases as follows:

{\bf Case one: $i=0$}. In this case, the transaction finds that there is no transaction in the block
at the beginning moment, thus its transaction-confirmation time includes
$k+1$ block-generation times $(k+1)/\mu_{2}$ and $k+1$ blockchain-building times $(k+1)/\mu_{1}$. Thus we obtain
\[
E[T_{|i=0}]=\overset{\infty}{\underset{k=0}{\sum}} \overset{b-1}{\underset{l=0}{%
\sum}} \pi_{0,kb+l}\left(  k+1\right)  \left(  \frac{1}{\mu_{1}}%
+\frac{1}{\mu_{2}}\right).
\]

{\bf Case two: $i=1,2,\ldots,b$}. In this case, the transaction finds that there are
$i$ transactions in the block, thus its transaction-confirmation time includes
$k+1$ block-generation times $(k+1)/\mu_{2}$ and $k+2$ blockchain-building times $(k+2)/\mu_{1}$. We obtain
\[
E[T_{|i\neq 0}]=\overset{b}{\underset{i=1}{\sum}} \overset{\infty}{%
\underset{k=0}{\sum}} \overset{b-1}{\underset{l=0}{\sum}} \pi_{i,kb+l}\left[  \frac{1}{\mu
_{1}}+\left(  k+1\right)  \left(  \frac{1}{\mu_{1}}+\frac{1}{\mu_{2}}\right)
\right]  .
\]
Therefore, by means of considering all the different cases, the average transaction-confirmation time $E[T]$ is given by
\[
E[T]= \overset{\infty}{\underset{k=0}{\sum}}
\overset{b-1}{\underset{l=0}{\sum}}
\pi_{0,kb+l}\left(  k+1\right)  \left(  \frac{1}{\mu_{1}}+\frac{1}%
{\mu_{2}}\right)  +
\overset{b}{\underset{i=1}{\sum}}
\overset{\infty}{\underset{k=0}{\sum}}
\overset{b-1}{\underset{l=0}{\sum}}
\pi_{i,kb+l}\left[
\frac{1}{\mu_{1}}+\left(  k+1\right)  \left(  \frac{1}{\mu_{1}}+\frac{1}%
{\mu_{2}}\right)  \right]  .
\]
This completes the proof. \textbf{{\rule{0.08in}{0.08in}}}
\end{proof}

Let $\langle \mathbf{x} \rangle_{|_{i=0}}$ be the 1st element of the vector $%
\mathbf{x}$. The following theorem provides a simple expression for the
average transaction-confirmation time $E[T]$ by means of the vector $\bm %
\pi_{0}$ and the rate matrix $R$.

\begin{The}
If the blockchain queueing system is stable, then the average
transaction-confirmation time $E[T]$ is given by
\begin{eqnarray*}
E\left[ T\right] &=&\frac{1}{\mu _{1}}\left[ \bm \pi_{0} \left( I-R\right)
^{-1} \bm e -\left\langle \bm \pi_{0}\left( I-R\right) ^{-1}\right\rangle
_{|_{i=0}}\right] \\
&&+\left( \frac{1}{\mu _{1}}+\frac{1}{\mu _{2}}\right) \bm \pi_{0}\left(
I-R^{b}\right) ^{-1}\left( I-R\right) ^{-1} \bm e.
\end{eqnarray*}
\end{The}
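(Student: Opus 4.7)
The plan is to obtain the matrix-geometric form of $E[T]$ by starting from the summation formula of Theorem 3, substituting the representation $\bm\pi_j=\bm\pi_0 R^j$ from Theorem 2, and then collapsing the resulting series in $R$ and $R^b$ into closed form. No additional probabilistic input is needed; the argument is essentially bookkeeping.

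First I would split the formula of Theorem 3 into two pieces sorted by the coefficients $1/\mu_1$ and $(1/\mu_1+1/\mu_2)$. The isolated $1/\mu_1$ contribution arises only in the $i\geq 1$ branch and equals $(1/\mu_1)\sum_{i=1}^{b}\sum_{j=0}^{\infty}\pi_{i,j}$. Rewriting the inner double sum as $\sum_{j\geq 0}\bm\pi_j\bm e$ minus its $i=0$ component, and invoking $\sum_{j\geq 0}\bm\pi_j=\bm\pi_0(I-R)^{-1}$, produces
\[
\frac{1}{\mu_1}\left[\bm\pi_0(I-R)^{-1}\bm e-\langle\bm\pi_0(I-R)^{-1}\rangle_{|_{i=0}}\right],
\]
which is exactly the first summand in the claim.

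Next I would collect all $(k+1)(1/\mu_1+1/\mu_2)$-weighted contributions. They combine over both cases $i=0$ and $i\geq 1$ to give
\[
\left(\frac{1}{\mu_1}+\frac{1}{\mu_2}\right)\sum_{k=0}^{\infty}(k+1)\sum_{l=0}^{b-1}\bm\pi_{kb+l}\bm e.
\]
Substituting $\bm\pi_{kb+l}=\bm\pi_0 R^{kb+l}$ and factoring yields $\bm\pi_0\bigl(\sum_{k\geq 0}(k+1)R^{kb}\bigr)\bigl(\sum_{l=0}^{b-1}R^l\bigr)\bm e$. The two matrix identities $\sum_{l=0}^{b-1}R^l=(I-R)^{-1}(I-R^b)$ and $\sum_{k\geq 0}(k+1)R^{kb}=(I-R^b)^{-2}$ then collapse the expression to $\bm\pi_0(I-R^b)^{-1}(I-R)^{-1}\bm e$, using commutativity of $I-R^b$ with $(I-R)^{-1}$. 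Adding the two contributions then yields the claimed identity.

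The only analytical point to justify is that $\rho(R)<1$, so that the matrix geometric series converge and the matrices $I-R$ and $I-R^b$ are invertible; this follows from the positive recurrence of $\bm Q$ guaranteed by Theorem 1. Otherwise the main obstacle is simply indexing, in particular the re-indexing $j=kb+l$ inherited from Theorem 3 and the bookkeeping needed to aggregate the $i=0$ and $i\geq 1$ cases correctly before applying the geometric-sum identities.
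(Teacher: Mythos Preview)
Your proposal is correct and follows essentially the same route as the paper's own proof: split the Theorem~3 sum into the $\tfrac{1}{\mu_1}$ and $(k+1)(\tfrac{1}{\mu_1}+\tfrac{1}{\mu_2})$ contributions, substitute $\bm\pi_{kb+l}=\bm\pi_0 R^{kb+l}$, and collapse via $\sum_{l=0}^{b-1}R^l=(I-R^b)(I-R)^{-1}$ and $\sum_{k\ge 0}(k+1)R^{kb}=(I-R^b)^{-2}$. Your explicit remark that $\rho(R)<1$ justifies the convergence and invertibility is a small addition the paper leaves implicit.
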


\begin{proof}
By using Theorem 3, we give some corresponding computation. Note that
\[
\frac{1}{\mu _{1}}\overset{b}{\underset{i=1}{\sum }}\overset{\infty }{%
\underset{k=0}{\sum }}\overset{b-1}{\underset{l=0}{\sum }}\pi _{i,kb+l}=%
\frac{1}{\mu _{1}}\left[ \overset{\infty }{\underset{k=0}{\sum }}\overset{b-1%
}{\underset{l=0}{\sum }}\bm \pi _{kb+l}\bm e-\overset{\infty }{\underset{k=0}{\sum }}%
\overset{b-1}{\underset{l=0}{\sum }}\pi _{0,kb+l}\right] ,
\]
\[
\overset{\infty }{\underset{k=0}{\sum }}\overset{b-1}{\underset{l=0}{\sum }}%
\bm \pi _{kb+l}\bm e=\overset{\infty }{\underset{k=0}{\sum }}\overset{b-1}{\underset{%
l=0}{\sum }}\bm \pi _{0}R^{kb+l}\bm e=\bm \pi _{0}\left( I-R\right) ^{-1}\bm e,
\]
\[
\overset{\infty }{\underset{k=0}{\sum }}\overset{b-1}{\underset{l=0}{\sum }}%
\pi _{0,kb+l}=\left\langle \overset{\infty }{\underset{k=0}{\sum }}\overset{%
b-1}{\underset{l=0}{\sum }}\bm \pi _{kb+l}\right\rangle _{|_{i=0}}=\left\langle
\bm \pi _{0}\left( I-R\right) ^{-1}\right\rangle _{|_{i=0}},
\]
we obtain
\[
\frac{1}{\mu _{1}}\overset{b}{\underset{i=1}{\sum }}\overset{\infty }{%
\underset{k=0}{\sum }}\overset{b-1}{\underset{l=0}{\sum }}\pi _{i,kb+l}=%
\frac{1}{\mu _{1}}\left[ \bm \pi _{0}\left( I-R\right) ^{-1}\bm e-\left\langle \bm \pi
_{0}\left( I-R\right) ^{-1}\right\rangle _{|_{i=0}}\right] .
\]
On the other hand, since
\[
\overset{b}{\underset{i=0}{\sum }}\overset{\infty }{\underset{k=0}{\sum }}%
\overset{b-1}{\underset{l=0}{\sum }}\pi _{i,kb+l}\left( k+1\right) \left(
\frac{1}{\mu _{1}}+\frac{1}{\mu _{2}}\right) =\left( \frac{1}{\mu _{1}}+%
\frac{1}{\mu _{2}}\right) \overset{\infty }{\underset{k=0}{\sum }}\overset{%
b-1}{\underset{l=0}{\sum }}\left( k+1\right) \bm \pi _{kb+l} \bm e,
\]
we get
\begin{eqnarray*}
& &\overset{\infty }{\underset{k=0}{\sum }}\overset{b-1}{\underset{l=0}{\sum }}%
\left( k+1\right) \bm \pi _{kb+l} = \overset{\infty }{\underset{k=0}{\sum }}%
\overset{b-1}{\underset{l=0}{\sum }}\left( k+1\right) \bm \pi _{0}R^{kb+l} \\
&=&\bm \pi _{0}\overset{\infty }{\underset{k=0}{\sum }}\left( k+1\right)
R^{kb}\left( I-R^{b}\right) \left( I-R\right) ^{-1} \\
&=&\bm \pi _{0}\left( I-R^{b}\right) ^{-1}\left( I-R\right) ^{-1}.
\end{eqnarray*}%
This leads our desired result. The proof is completed.  \textbf{{\rule{0.08in}{0.08in}}}
\end{proof}

In the remainder of this section, we provide some numerical examples to
verify computability of our theoretical results and to analyze how the three
performance measures $E\left[ J\right]$, $E\left[ I\right]$ and $E\left[ T%
\right]$ depend on some crucial parameters of the blockchain queueing system.

In the numerical examples, we take some common parameters: Block-generation
service rate $\mu_{1}\in\left[ 0.05,1.5\right] $, blockchain-building service
rate $\mu_{2}=2$, arrival rate $\lambda=0.3$, maximum block size
$b_{1}=40, b_{2}=80, b_{3}=320$.

From the left part of Figure 3, it is seen that $E\left[ J\right] $ and $E%
\left[ I\right] $ decrease, as $\mu_{1}$ increases. At the same time, from
the right part of Figure 3, $E\left[ J\right] $ decreases as $b$ increases,
but $E\left[ I\right] $ increases as $b$ increases.

\begin{figure}[th]
\centering        \includegraphics[width=6cm]{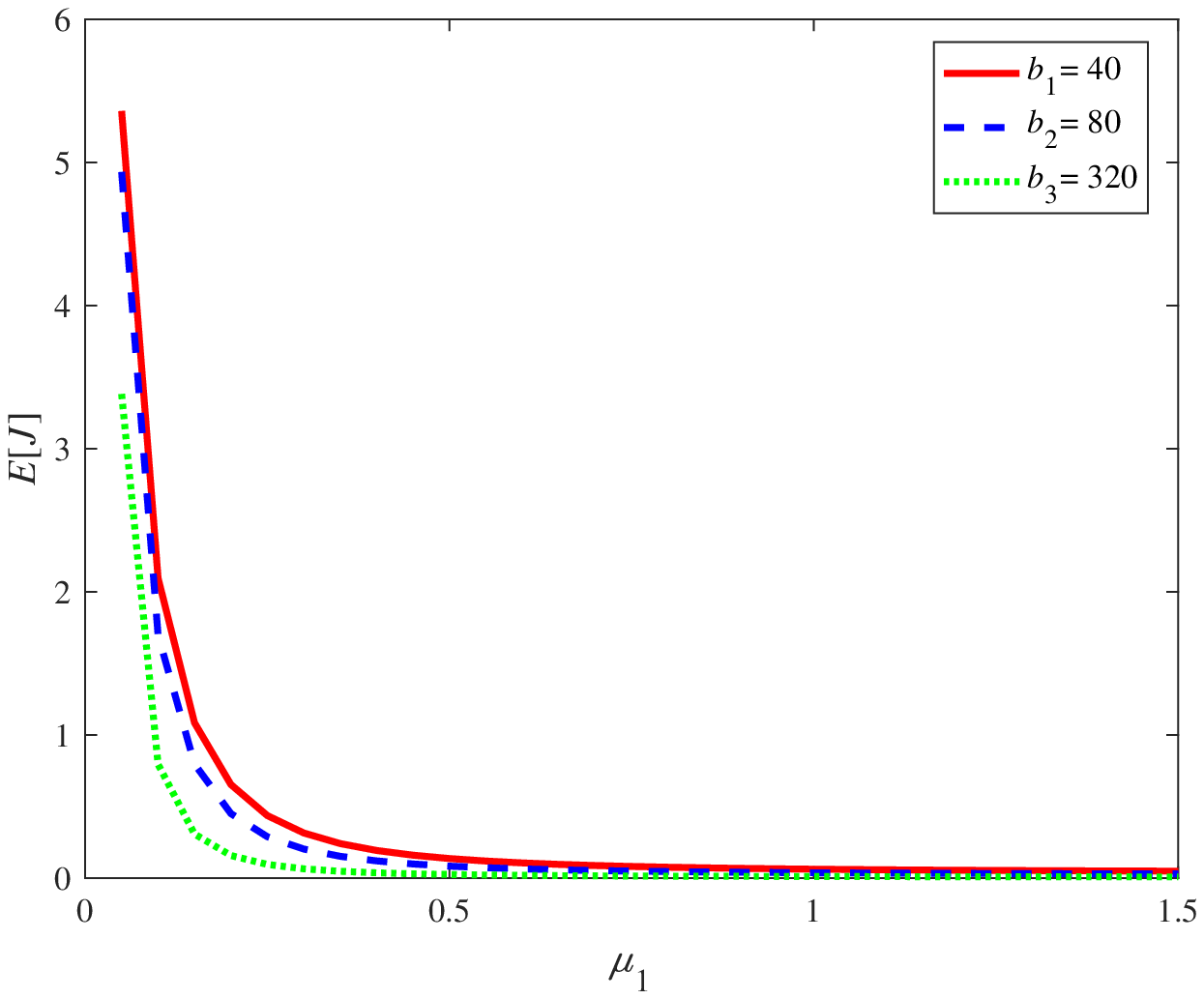} \centering        %
\includegraphics[width=6cm]{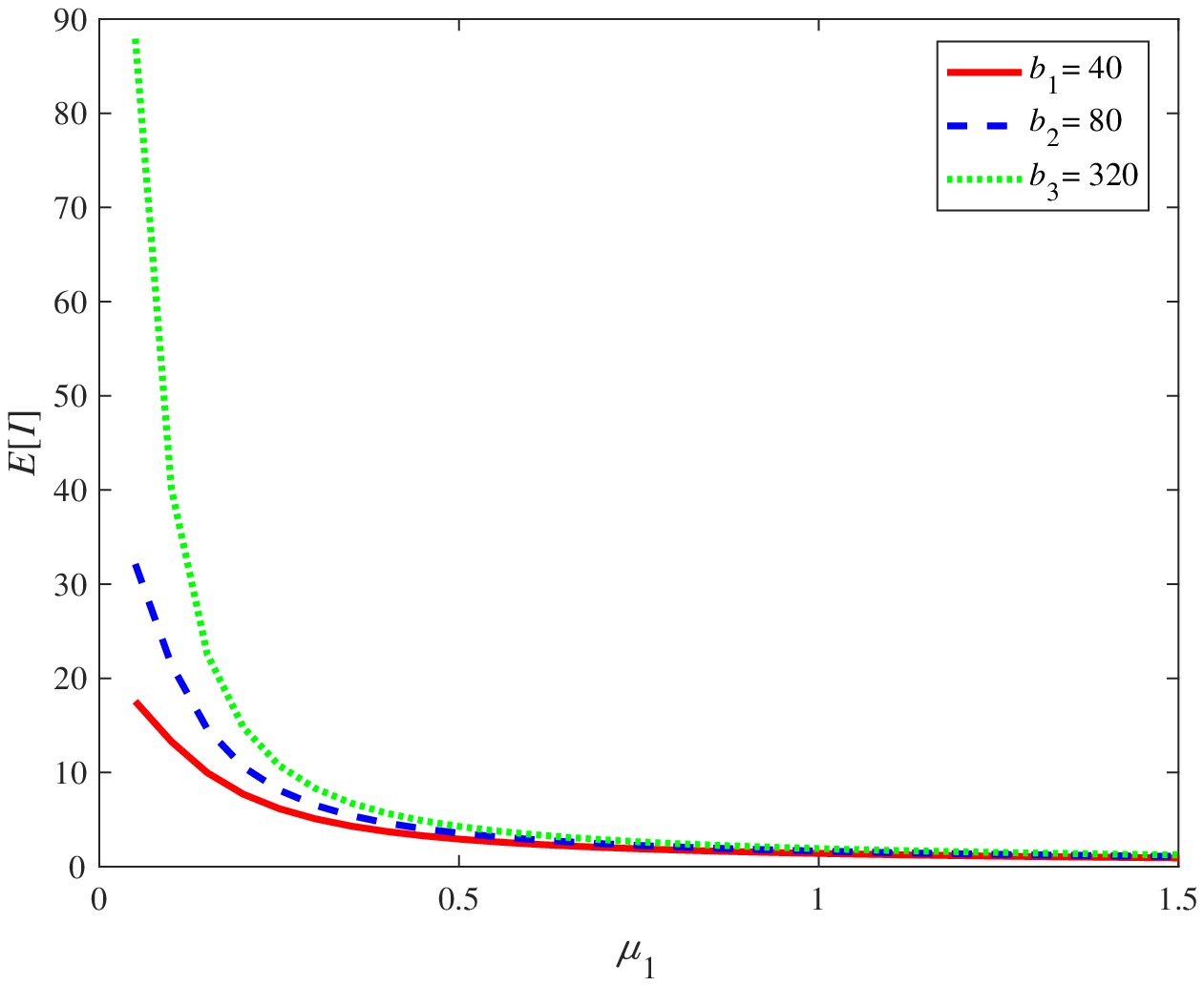}
\caption{$E\left[ J\right] $, $E\left[ I\right] $ vs. $\protect\mu_{1}$ for
three different $b_{1},b_{2},b_{3}$.}
\label{figure:Fig-3}
\end{figure}

From Figure 4, it is seen that $E\left[ T\right] $ decreases, as $\mu_{1}$
increases; while it decreases, as $b$ increases.

\begin{figure}[th]
\centering        \includegraphics[width=11cm]{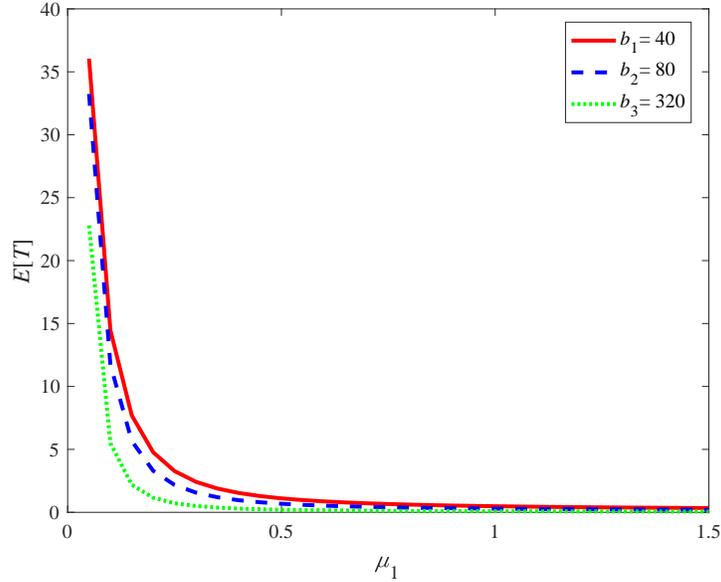}
\caption{$E\left[ T\right] $ vs. $\protect\mu_{1}$ for three different $%
b_{1},b_{2},b_{3}$.}
\label{figure:fig-4}
\end{figure}

Finally, we specifically observe the impact of the maximal block size $b$ on the $E\left[ T\right] $.
Comparing with the parameters given in Figures 3 and 4, we only change the two parameters $b$ and $\lambda$:
$b=40, 41, 42, \ldots, 500$, $\lambda_1=0.1, \lambda_2=0.9, \lambda_3=1.5$.
From Figure 5, it is seen that $E\left[ T\right] $ decreases, as $b$
increases. In addition, it is observed that there exists a critical value $%
\eta$ such that when $b \leq \eta$, $E\left[ T\right] $ increases, as $\lambda$
increases. On the contrary, when $b > \eta$, $E\left[ T\right] $ increases, as $\lambda$
decreases. In fact, such a difference is also intuitive that the block generation time becomes
bigger as $\lambda$ decreases.

\begin{figure}[th]
\centering        \includegraphics[width=11cm]{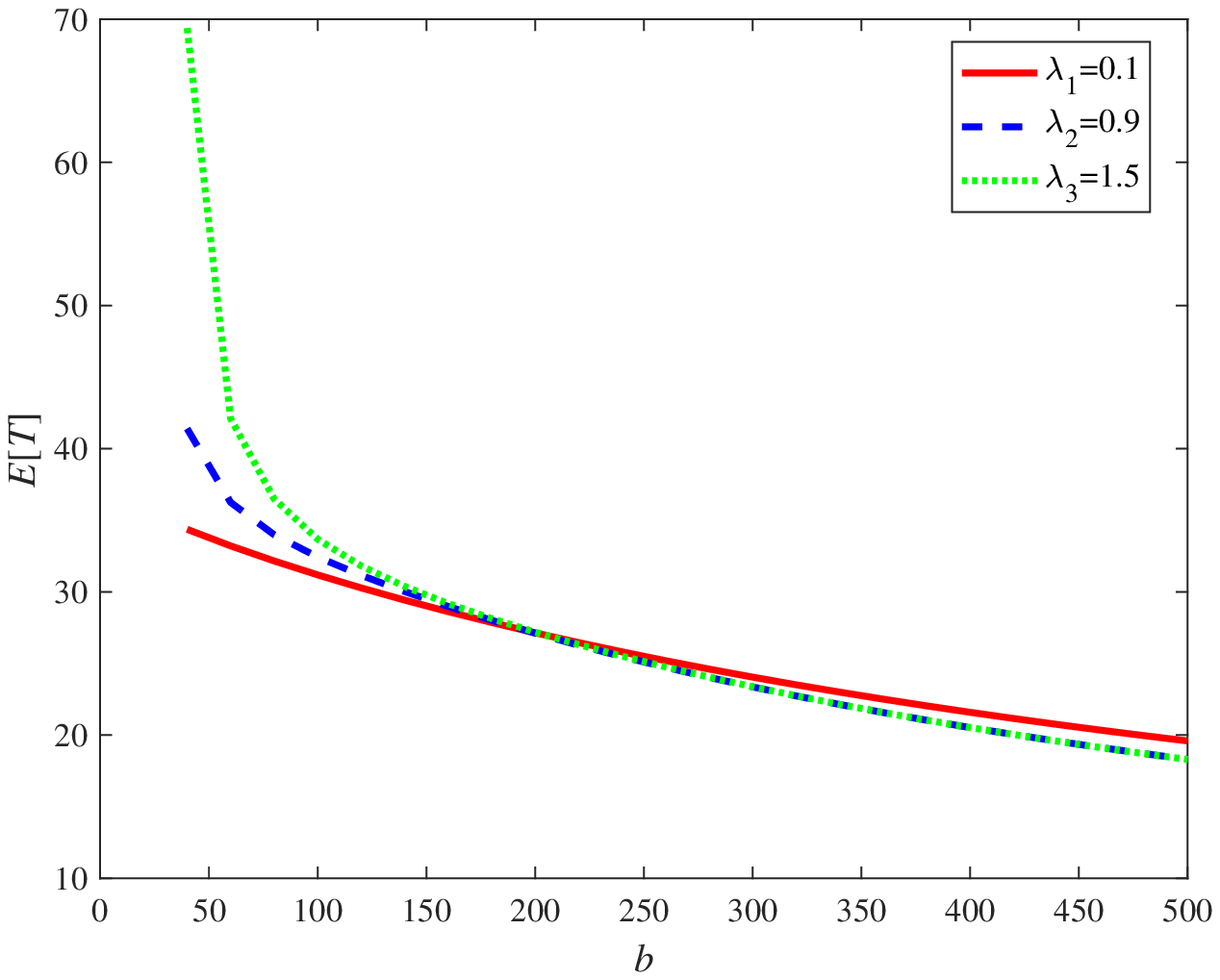}
\caption{$E\left[ T\right] $ vs. $b$ for three different $%
\lambda_{1}, \lambda_{2}, \lambda_{3}$.}
\label{figure:fig-5}
\end{figure}

\section{Concluding Remarks}

In this paper, our aim is to focus on how to develop queueing theory of
blockchain systems. To do this, we design an interesting Markovian
batch-service queueing system with two different service stages, while the
two stages can well express the mining process in the miners pool and the
building of a new blockchain. By using the matrix-geometric solution, we not
only obtain a system stable condition but also simply express three key performance
measures: The average number of transactions in the queue, the average
number of transactions in a block, and the average transaction-confirmation
time. Finally, we use numerical examples to verify computability of our
theoretical results. Along these lines, we will continue our future research
on the following directions:

-- Considering more general blockchain queueing systems, for example, the
Markov arrival process of transactions, and two stages of phase-type batch
services.

-- Analyzing multiple classes of transactions in the blockchain systems.
Also, the transactions are dealt with in the block-generation and
blockchain-building processes according to a priority service discipline.

-- When the arrivals of transactions are a renewal process, and the
block-generation times or the blockchain-building times follow general
probability distributions, an interesting research work is to focus on fluid
and diffusion approximations of the blockchain systems.

-- Setting up reward function with respect to cost structures, transaction
fees, mining reward, consensus mechanism, security and so forth.
Our aim is to find optimal policies in the blockchain systems.

-- Further developing stochastic optimization and control, Markov decision
processes and stochastic game theory in the blockchain systems.


\begin{thebibliography}{99}
\bibitem{Bec:2017} Beck, R., Avital, M., Rossi, M., Thatcher, J.B.:
Blockchain technology in business and information systems research. Business
\& Information Systems Engineering \textbf{59}(6), 381--384 (2017)

\bibitem{Bha:2015} Bhaskar, N.D., Chuen, D.L.K.: Bitcoin mining technology.
In: Handbook of Digital Currency, pp. 45--65 (2015)

\bibitem{Bia:2018} Biais, B., Bisiere, C., Bouvard, M., Casamatta, C.: The
blockchain folk theorem. Working paper, Toulouse School of Economics,
Universit\'{e} Toulouse Capitole, pp. 1--71 (2018)

\bibitem{Bhm:2015} B\"{o}hme, R., Christin, N., Edelman, B., Moore, T.:
Bitcoin: Economics, technology, and governance. Journal of Economic
Perspectives \textbf{29}(2), 213--38 (2015)

\bibitem{Cat£º2016} Catalini, C., Gans, J.S.: Some simple economics of the
blockchain. National Bureau of Economic Research, No. w22952, pp. 1--29
(2016)

\bibitem{Con£º2018} Constantinides, P., Henfridsson, O., Parker, G.G.:
Introduction -- Platforms and infrastructures in the digital age.
Information Systems Research \textbf{29}(2), 381--400 (2018)

\bibitem{Dav£º2016} Davidson, S., De Filippi, P., Potts, J.: Economics of
blockchain. Online Available: HAL Id: hal-01382002, pp. 1--23 (2016)

\bibitem{Deb:2017} Debus, J.: Consensus methods in blockchain systems.
Frankfurt School of Finance \& Management, Blockchain Center, Technical
Report, pp. 1--58 (2017)

\bibitem{Din:2018} Dinh, T.T.A., Liu, R., Zhang, M., Chen, G., Ooi, B.C.,
Wang, J.: Untangling blockchain: A data processing view of blockchain
systems. IEEE Transactions on Knowledge and Data Engineering \textbf{30}(7),
1366--1385 (2018)

\bibitem{For:2015} Foroglou, G., Tsilidou, A.L.: Further applications of the
blockchain. Abgerufen Am \textbf{3}, 1--9 (2015)

\bibitem{Hou:2014} Houy, N.: The bitcoin mining game. Online Available: HAL
Id: halshs-00958224, pp. 1--17 (2014)

\bibitem{Huc:2016} Huckle, S., Bhattacharya, R., White, M., Beloff, N.:
Internet of Things, blockchain and shared economy applications. Procedia
Computer Science \textbf{98}, 461-466 (2016)

\bibitem{Kas:2016} Kasahara, S., Kawahara, J.: Effect of Bitcoin fee on
transaction-confirmation process. arXiv preprint arXiv:1604.00103, pp. 1--24
(2016)

\bibitem{Kaw:2017} Kawase, Y., Kasahara, S.: Transaction-confirmation time
for Bitcoin: A queueing analytical approach to blockchain mechanism. In:
International Conference on Queueing Theory and Network Applications, pp.
75--88. Springer (2017)

\bibitem{Kia:2016} Kiayias, A., Koutsoupias, E., Kyropoulou, M.,
Tselekounis, Y.; Blockchain mining games. In: Proceedings of the 2016 ACM
Conference on Economics and Computation, pp. 365--382. ACM (2016)

\bibitem{Kon:2014} Kondor, D., P\'{o}sfai, M., Csabai, I., Vattay, G.: Do the
rich get richer? An empirical analysis of the Bitcoin transaction network.
PloS one \textbf{9}(2), 1--10 (2014)

\bibitem{Lew:2015} Lewenberg, Y., Bachrach, Y., Sompolinsky, Y., Zohar, A.,
Rosenschein, J.S.: Bitcoin mining pools: A cooperative game theoretic
analysis. In: Proceedings of the 2015 International Conference on Autonomous
Agents and Multiagent Systems, pp. 919--927 (2015)

\bibitem{Li:2010} Li, Q.L.: Constructive Computation in Stochastic Models
with Applications: The RG-Factorizations. Springer (2010)

\bibitem{Lin:2017} Lin, I.C., Liao, T.C.: A survey of blockchain security
issues and challenges. International Journal of Network Security \textbf{19}%
(5), 653--659 (2017).

\bibitem{Lind:2017} Lindman, J., Tuunainen, V.K., Rossi, M.: Opportunities
and risks of blockchain technologies -- a research agenda. In: Proceedings
of the 50th Hawaii International Conference on System Sciences, pp.
1533--1542 (2017)

\bibitem{Met:2016} Mettler, M.: Blockchain technology in healthcare: The
revolution starts here. In: The 18th International IEEE Conference on
e-Health Networking, Applications and Services, pp. 1--3. IEEE (2016)

\bibitem{Mon:2017} Montemayor, L., Boersma, T., van Dorp, T.: Comprehensive
guide to companies involved in blockchain and energy. Blockchain Business.
(2018)

\bibitem{Nak:2008} Nakamoto, S.: Bitcoin: A peer-to-peer electronic cash
system. pp. 1--9 (2008). [Online]. Available: https://bitcoin.org/bitcoin.pdf

\bibitem{Neu:1981} Neuts, M.F.: Matrix-Geometric Solutions in Stochastic
Models. Johns Hopkins University Press (1981)

\bibitem{Nri:2015} NRI, Survey on blockchain technologies and related
services. FY2015 Report, Nomura Research Institute, pp. 1--78 (2015).
[Online]. Available: http://www.meti.go.jp/english/press/2016/pdf/0531 01f.pdf

\bibitem{Obe:2013} Ober, M., Katzenbeisser, S., Hamacher, K.: Structure and
anonymity of the Bitcoin transaction graph. Future Internet \textbf{5}(2),
237--250 (2013)

\bibitem{Pla:2016} Plansky, J., O'Donnell, T., Richards, K.: A strategist's
guide to blockchain. Strategy + Business Magazine \textbf{82}, 1--12 (2016)

\bibitem{Ris:2017} Risius, M., Spohrer, K.: A blockchain research framework.
Business \& Information Systems Engineering \textbf{59}(6), 385--409 (2017)

\bibitem{Swa:2015} Swan, M.: Blockchain: Blueprint for a new economy.
O'Reilly Media, Inc. (2015)

\bibitem{Tsa:2016} Tsai, W.T., Blower, R., Zhu, Y., Yu, L.: A system view of
financial blockchains. In: The IEEE Symposium on Service-Oriented System
Engineering, pp. 450--457. IEEE (2016)

\bibitem{Tsc:2016} Tschorsch, F., Scheuermann, B.: Bitcoin and beyond: A
technical survey on decentralized digital currencies. IEEE Communications
Surveys \& Tutorials \textbf{18}(3), 2084--2123 (2016)

\bibitem{Vra:2017} Vranken, H.: Sustainability of Bitcoin and blockchains.
Current Opinion in Environmental Sustainability \textbf{28}, 1--9 (2017)

\bibitem{Wan:2018} Wang, W., Hoang, D.T., Xiong, Z., Niyato, D., Wang, P.,
Hu, P., Wen, Y.: A survey on consensus mechanisms and mining management in
blockchain networks. arXiv preprint arXiv:1805.02707, pp. 1--33 (2018)

\bibitem{Yli:2016} Yli-Huumo, J., Ko, D., Choi, S., Park, S., Smolander, K.:
Where is current research on blockchain technology? --- A systematic review.
PloS one \textbf{11}(10), 1--27 (2016)

\bibitem{Zhe:2017b} Zheng, Z., Xie, S., Dai, H.N., Chen, X., Wang, H.: An
overview of blockchain technology: Architecture, consensus, and future
trends. In: The IEEE International Congress on Big Data, pp. 557--564. IEEE
(2017)

\bibitem{Zhe:2017a} Zheng, Z., Xie, S., Dai, H.N., Wang, H.: Blockchain
challenges and opportunities: A survey. International Journal of Web and
Grid Services \textbf{13}(2), 1--25 (2017)
\end{thebibliography}
\end{document}